\newtheorem{proposition}{Proposition}[section]
\newtheorem{lemma}[proposition]{Lemma}
\newcommand*\bigcdot{\mathpalette\bigcdot@{.5}}
\newcommand*\bigcdot@[2]{\mathbin{\vcenter{\hbox{\scalebox{#2}{$\m@th#1\bullet$}}}}}
\journal{arXiv.org}
\begin{document}

\begin{frontmatter}



\title{Hidden semi-Markov models with inhomogeneous state dwell-time distributions}


\author[inst1]{Jan-Ole Koslik}

\affiliation[inst1]{organization={Bielefeld University},
            addressline={Universitätsstraße 25}, 
            city={Bielefeld},
            postcode={33615}, 
            country={Germany}}



\begin{abstract}
The well-established methodology for the estimation of hidden semi-Markov models (HSMMs) as hidden Markov models (HMMs) with extended state spaces is further developed to incorporate covariate influences across all aspects of the state process model, in particular, regarding the distributions governing the state dwell time. The special case of periodically varying covariate effects on the state dwell-time distributions --- and possibly the conditional transition probabilities --- is examined in detail to derive important properties of such models, namely the periodically varying unconditional state distribution as well as the overall state dwell-time distribution. Through simulation studies, we ascertain key properties of these models and develop recommendations for hyperparameter settings. Furthermore, we provide a case study involving an HSMM with periodically varying dwell-time distributions to analyse the movement trajectory of an arctic muskox, demonstrating the practical relevance of the developed methodology.
\end{abstract}



\begin{keyword}
hidden semi-Markov models \sep dwell-time distribution \sep periodic variation \sep time series modelling \sep statistical ecology
\end{keyword}

\end{frontmatter}



\section{Introduction}
\label{sec1}

Hidden Markov models (HMMs) have emerged as popular tools for analysing a wide range of time series data across domains like ecology \citep{mcclintock2020uncovering, farhadinia2020understanding}, economics \citep{rahman2020financial, chen2022renewable}, climatology \citep{stoner2020advanced, nyongesa2020non}, medicine \citep{kwon2020dpvis, soper2020hidden} and finance \citep{liu2012stock}. Such models comprise two stochastic processes in discrete time, namely an unobserved Markovian state process and an observed state-dependent process. Observations are assumed to be generated by one of $N$ possible distributions, which, at any time point, is chosen by the underlying state process. For an exhaustive description, see \cite{zucchini2016hidden}. HMMs provide a powerful tool for time series data with underlying correlated states, as the model's hidden states can often be treated as interpretable entities, allowing for better understanding of otherwise very complex data. For instance, in animal movement analysis, hidden states can be treated as proxies for distinct behavioural patterns \citep{van2019classifying} while in econometric applications, they can be interpreted as switching market regimes \citep{guidolin2011markov}. 

However, in HMMs governed by a homogeneous first-order Markovian state process, the time spent in a hidden state, also called the state dwell time or sojourn time, necessarily follows a geometric distribution. This distribution is characterised by being \textit{memoryless} and having a \textit{monotonously decreasing} probability mass function. This implies, {\it inter alia}, that the most likely duration of a stay in a state is one sampling unit, which can be highly unrealistic in practical applications, for example when trying to capture the stochastic dynamics of the resting state of humans or animals with regular sleeping patterns.

To directly mitigate these limitations, so-called hidden semi-Markov models (HSMMs) have emerged as a flexible and powerful generalisation of HMMs, allowing for the explicit specification and estimation of state dwell-time distributions. Such models have attracted attention across disciplines like biology \citep{guedon2003estimating, guedon2005hidden}, finance \citep{bulla2006stylized}, ecology \citep{langrock2011hidden, van2015hidden} and environmental science \citep{rojas2021bayesian}. Despite their promise to generalise HMMs, the commonly applied HSMMs are limited to time-homogeneous dwell-time distributions, which may again be an unrealistic modelling assumption. Furthermore, in fields like ecology, it has become common practice to include inhomogeneity in the state process of HMMs by linking the transition probabilities to external covariates \citep{papastamatiou2018activity}.

Recently, \citet{koslik2023inference} showed that the state dwell-time distributions of Markov chains with periodically varying transition probabilities can in fact deviate substantially from geometric distributions. At first glance, this result can be seen as an argument in favour of using simple Markovian (instead of the more complex semi-Markovian) state processes, as the common objection concerning the implied geometric state dwell-time distributions to some extent does not apply anymore once covariates such as time of day are included in the model. However, such covariates alone can not be expected to always fully capture the stochastic state dynamics, as for example the stays in a resting state will typically depend on both a) the day-night cycle but also b) the actual time of initiation of the resting mode --- a typical human being could for example sleep from 11 pm to 7 am on an average night, but a deviation from this pattern to an earlier bedtime would likely also shift the time of them getting up in the morning. The natural translation of such a pattern into a mathematical model would involve temporal covariates for addressing a), and additionally semi-Markovian structures for capturing b).


To fill this gap and thereby extend the versatility of the HSMM framework, we propose a novel framework that accommodates inhomogeneous dwell-time distributions varying with external covariates (e.g.\ time of day, but also other covariates). In this contribution, we i) generalise the model formulation and inference procedures of HSMMs to accommodate such inhomogeneities and ii) derive important properties of such models in the special case of periodically varying covariates such as time of day or day of year. Through simulation experiments and the application of our model to movement data of an arctic muskox, we demonstrate the efficacy and practical implications of this approach.

The paper proceeds as follows: Section \ref{sec:Methodology} explains the basic model structure and inference for homogeneous HSMMs followed by modifications necessary to accommodate inhomogeneous dwell-time distributions. In Section \ref{sec:simulation} we conduct simulation experiments to establish important properties of the maximum likelihood estimator for inhomogeneous HSMMs and develop recommendations for the specification of hyperparameters. A case study, applying the novel model class to a movement data set of an arctic muskox is given in Section \ref{sec:application}.

\section{Methodology}
\label{sec:Methodology}

\subsection{Basic model formulation}

An HSMM is a doubly stochastic process comprising a latent process $\{C_t\}$ and an observed process $\{X_t\}$. In contrast to \textit{hidden Markov models}, the latent process is an $N$-state \textit{semi-Markov chain}, a generalisation of a Markov chain to arbitrary state dwell-time distributions. At any given time-point $t$, conditional on the semi-Markov process being in a state $i$ of the state space $\mathcal{S} = \{1, \dotsc, N\}$, $X_t$ is independent of $C_k$ and $X_k$ for all $k \neq t$ and thereby generated by a state-dependent distribution $f(x_t \mid C_t = i)$ which we denote by $f_i(x_t)$ for ease of notation.

The latent semi-Markov chain can be characterised by two components: First, a set of probability distributions on the positive integers that determine the time spent in each hidden state, defined by probability mass functions 
$$
d_i(r) = \Pr(C_{t+r} \neq i, C_{t+r-1} = i, \dotsc, C_{t+1} = i \mid C_t = i, C_{t-1} \neq i).
$$
for each duration $r \in \mathbb{N}$ and each state $i \in \mathcal{S}$. Second, the transition dynamics, given that a state is left, are governed by a so-called embedded Markov chain $\{S_k\}$ with conditional transition probability matrix (t.p.m.) $\bm{\Omega}$, with entries
$$
\omega_{ij} = \Pr(S_{k+1} = j \mid S_k = i, S_{k+1} \neq i)
$$
and $\omega_{ii} = 0$, i.e.\ the embedded chain cannot stay in a state for longer than one time step. In general, it needs not to be homogeneous and the conditional transition probabilities can be modelled as functions of external covariates \citep[see][]{langrock2014modeling}, however, we initially assume homogeneity for ease of notation. 

Now consider the process $\{C_t\}$ generated by the embedded chain and the dwell-time distributions $d_i, \: i \in \mathcal{S}$ as follows: each event $S_k = i$ generates $r_k$ values all equal to $i$, where $r_k$ is a realisation of the according dwell-time distribution $d_{i}$, as depicted in Figure \ref{fig: structure semi-Markov chain}. We then call the generated process $\{C_t\}$ a semi-Markov process, as generally, it is not Markovian. Setting all $d_i$ to be geometric distributions again yields the special case of a Markov process. Thus, homogeneous semi-Markov processes constitute a flexible generalisation of homogeneous Markov chains. Figure~\ref{fig: structure semi-Markov chain} illustrates the generation of a semi-Markov process.

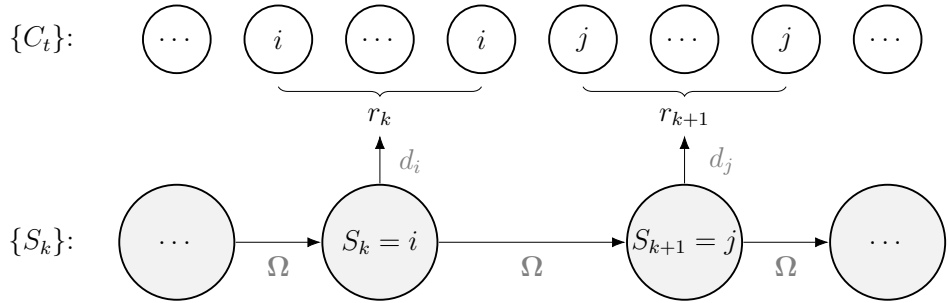
\begin{figure}
    \centering
    \scalebox{0.9}{
    \begin{tikzpicture}
        \coordinate (A) at (0,0);
        \coordinate (B) at (3,0);
        \coordinate (C) at (7.5,0);
        \coordinate (D) at (10.5,0);
        
        \filldraw[fill=black!5, thick] (A) circle (0.85);
        \draw (A) node {$\dotsc$};
        \filldraw[fill=black!5, thick] (B) circle (0.85);
        \draw (B) node {$S_{k}=i$};
        \filldraw[fill=black!5, thick] (C) circle (0.85);
        \draw (C) node {$S_{k+1}=j$};
        \filldraw[fill=black!5, thick] (D) circle (0.85);
        \draw (D) node {$\dotsc$};

        \draw[-{Latex[length=2mm]}] ($(A)+(0.87,0)$) -- ($(B)-(0.87,0)$);
        \draw[-{Latex[length=2mm]}] ($(B)+(0.87,0)$) -- ($(C)-(0.87,0)$);
        \draw[-{Latex[length=2mm]}] ($(C)+(0.87,0)$) -- ($(D)-(0.87,0)$);

        \draw[thick] ($(A)+(0,3)$) circle (0.5);
        \draw ($(A)+(0,3)$) node {$\dotsc$};
        \draw[thick] ($(B)+(-1.5,3)$) circle (0.5);
        \draw ($(B)+(-1.5,3)$) node {$i$};
        \draw[thick] ($(B)+(0,3)$) circle (0.5);
        \draw ($(B)+(0,3)$) node {$\dotsc$};
        \draw[thick] ($(B)+(1.5,3)$) circle (0.5);
        \draw ($(B)+(1.5,3)$) node {$i$};

        \draw[thick] ($(C)+(-1.5,3)$) circle (0.5);
        \draw ($(C)+(-1.5,3)$) node {$j$};
        \draw[thick] ($(C)+(0,3)$) circle (0.5);
        \draw ($(C)+(0,3)$) node {$\dotsc$};
        \draw[thick] ($(C)+(1.5,3)$) circle (0.5);
        \draw ($(C)+(1.5,3)$) node {$j$};
        \draw[thick] ($(D)+(0,3)$) circle (0.5);
        \draw ($(D)+(0,3)$) node {$\dotsc$};

        \draw[-{Latex[length=2mm]}] ($(B)+(0,0.87)$) -- ($(B)+(0,2)-(0,0.42)$);
        \draw[-{Latex[length=2mm]}] ($(C)+(0,0.87)$) -- ($(C)+(0,2)-(0,0.42)$);
        \draw[color=gray] ($(B)+(0.45,1.2)$) node {$d_i$};
        \draw[color=gray] ($(C)+(0.55,1.2)$) node {$d_j$};
        \draw[decorate, decoration={brace, amplitude = 4pt}] ($(B)+(1.5,2.3)$) -- ($(B)+(-1.5,2.3)$) node[midway, below, yshift = -5pt]{$r_k$};
        \draw[decorate, decoration={brace, amplitude = 4pt}] ($(C)+(1.5,2.3)$) -- ($(C)+(-1.5,2.3)$) node[midway, below, yshift = -5pt]{$r_{k+1}$};

        \draw ($(A) + (-2,0)$) node {$\{S_k\}$:};
        \draw ($(A) + (-2,3)$) node {$\{C_t\}$:};

        \draw[color=gray] ($(A)+(1.5,-0.35)$) node {$\bm{\Omega}$}; 
        \draw[color=gray] ($(B)+(2.25,-0.35)$) node {$\bm{\Omega}$}; 
        \draw[color=gray] ($(C)+(1.5,-0.35)$) node {$\bm{\Omega}$}; 
    \end{tikzpicture}
    }
    \caption{Visualisation of how the semi-Markov process $\{C_t\}$ is generated by the embedded Markov chain and the state dwell-time distributions $d_i$.}
    \label{fig: structure semi-Markov chain}
\end{figure}

\subsection{Parameter estimation and inference tools}

The difficulty in conducting inference for HSMMs lies in the relaxation of the Markov property which hinders the use of well-established recursive inference procedures.
Here, we build on the approach presented by \citet{langrock2011hidden} where the $N$-state semi-Markov process $\{C_t\}$ is approximated by a Markov chain $\{C_t^*\}$ operating on an enlarged state space.
The basic idea is to define a Markov chain with a large state space, with the transitions among the states structured in a very specific way so as to imply stochastic dynamics that very closely approximate the properties of the HSMM. The advantage of this approach is that it enables to apply the powerful HMM machinery also to the more complex class of HSMMs.

Specifically, the strategy involves representing each state $i \in \mathcal{S}$ of the semi-Markov process $\{C_t\}$ by a state aggregate of $N_i$ states defined as
$$
I_i = \Bigl\{n: \sum_{k=0}^{i-1}N_k < n \leq \sum_{k=0}^i N_k \Bigr\}.
$$
Additionally, we define $I_i^- = \mathrm{min}(I_i)$ and $I_i^+ = \mathrm{max}(I_i)$ to be the lowest and highest states in each state aggregate $I_i$. In an HSMM with the semi-Markov process represented by such state aggregates, each state in the state aggregate $I_i$ gives rise to the same state-dependent distribution. Hence, on an observation level, one cannot distinguish between states belonging to the same state aggregate and the aggregates are only used to approximate the dwell-time structure of the semi-Markov chain.

To allow the state aggregates to represent the dwell-time distributions $d_i, \: i \in \mathcal{S}$, as well as the conditional t.p.m.\ $\bm{\Omega}$, the t.p.m.\ of the approximating Markov chain $\{C_t^*\}$ on the extended state space $\mathcal{S}^* = \{1,\dotsc, M\}$ with $M = \sum_{i=1}^N N_i$ is defined as the block matrix
\begin{equation}
    \bm{\Gamma} = 
    \begin{pmatrix}
        \bm{\Gamma}_{11} & \cdots & \bm{\Gamma}_{1N}\\
        \vdots & \ddots & \vdots\\
        \bm{\Gamma}_{N1} & \cdots & \bm{\Gamma}_{NN}\\
    \end{pmatrix},
    \label{eq: Gamma HSMM}
\end{equation}
where the $N_i \times N_i$ diagonal block $\bm{\Gamma}_{ii}, \: i \in \mathcal{S}$ is defined as
\begin{equation}
    \bm{\Gamma}_{ii} = 
    \begin{pmatrix}
        0 & 1-c_i(1) & 0 & \cdots & 0 \\
        0 & 0 & \ddots &  & \vdots \\
        \vdots & \vdots & & & 0 \\
        0 & 0 & \cdots & 0 & 1-c_i(N_i-1) \\
        0 & 0 & \cdots & 0 & 1-c_i(N_i)\\
    \end{pmatrix}
    \label{hsmm_gamma_diag}
\end{equation}
for $N_i \geq 2$, and as $\bm{\Gamma}_{ii} = 1-c_i(1)$ for $N_i=1$, and the $N_i \times N_j$ off-diagonal matrices $\bm{\Gamma}_{ij}, \: i,j \in \mathcal{S}, \: i \neq j$, as
\begin{equation}
    \bm{\Gamma}_{ij} = 
    \begin{pmatrix}
        \omega_{ij} c_i(1) & 0 & \cdots & 0 \\
        \omega_{ij} c_i(2) & 0 & \cdots & 0 \\
        \vdots \\
        \omega_{ij} c_i(N_i) & 0 & \cdots & 0 \\
    \end{pmatrix}.
\end{equation}
In the case $N_j=1$, the zeros disappear. For  $r \in \mathbb{N}$,
\begin{equation}
    c_i(r) = \begin{cases}
        \frac{d_i(r)}{1-F_i(r-1)} \quad & \text{for} \: F_i(r-1) < 1, \\
        1 & \text{for} \: F_i(r-1) = 1,
    \end{cases} 
\end{equation}
where $F_i$ is the cumulative distribution function of $d_i$. The \textit{hazard rates} $c_i$ are the key to this approximation, as they allow the representation of any distribution on the positive integers. They only rely on the parameters of the dwell-time distributions and not on the $\omega_{ij}$; thus, including covariates in the conditional probabilities is straightforward. Transitions within a state aggregate $I_i$, and thus the dwell times in the state aggregate, are governed by the diagonal block $\bm{\Gamma}_{ii}$ while the off-diagonal block matrices $\bm{\Gamma}_{ij}$ contain the probabilities of all possible transitions from state aggregate $I_i$ to state aggregate $I_j$. 
Detailed proofs can be found in \cite{langrock2011hidden}.

Using this representation of the semi-Markov chain as a Markov chain with extended state space, arbitrary dwell-time distributions can be represented. For dwell times larger than $N_i$, the dwell-time distributions of the approximating Markov chain will have a geometric right tail, which in general will not be the case for the dwell-time distributions of the semi-Markov chain that are to be approximated. However, by increasing $N_i$, the approximation can be made arbitrarily accurate. This representation effectively converts an HSMM to a regular HMM with an enlarged state space, allowing for the use of the entire, well-explored HMM methodology like direct numerical maximisation of the likelihood, state-decoding \citep{viterbi1967error}, forecasting and the calculation of pseudo-residuals \citep{zucchini2016hidden}. Moreover, for homogeneous HSMMs, it is convenient to assume stationarity of the approximating chain $\{C_t^*\}$, such that the initial distribution need not be estimated, but can be computed from the transition probability matrix by solving the system of equations
$\bm{\delta} = \bm{\delta} \bm{\Gamma}$, subject to $\sum_{i=1}^M \delta_i = 1$.

Parameters can then be estimated via numerical maximization of the approximate likelihood \citep{langrock2011hidden, macdonald2014numerical} which, for HMMs, can be evaluated recursively using the \textit{forward algorithm} as
\begin{equation}
    \mathcal{L}(\bm{\theta}) = \bm{\delta} \bm{P}(x_1)\bm{\Gamma} \bm{P}(x_2)\bm{\Gamma} \bm{P}(x_3)\bm{\Gamma} \cdot \dotsc \cdot \bm{\Gamma} \bm{P}(x_T)\bm{1},
    \label{eq: likelihood}
\end{equation}
where
$$\bm{P}(x_t) = \text{diag}(\underbrace{f_1(x_t), \dotsc, f_1(x_t)}_{N_i \text{ times}}, \dotsc, \underbrace{f_N(x_t), \dotsc, f_N(x_t)}_{N_N \text{ times}})$$
is a diagonal matrix with state-dependent densities as its entries (each repeated $N_i$ times as each state aggregate is associated with the same state-dependent distribution), and $\bm{1} \in \mathbb{R}^M$ is a column vector of ones. The parameter vector $\bm{\theta}$ then comprises the parameters governing the dwell-time distributions, the conditional transition probabilities and the state-dependent distributions.

\subsection{Inhomogeneous HSMMs}
\label{subsec:Inhomogeneous HSMMs}

The construction of inhomogeneous HSMMs is similar to the homogeneous case. 
Inhomogeneity in such models could concern either the state dwell-time distibutions or the conditional state transition probabilities. For example, the mean dwell time in a state could depend on the time of day the stay in the state is initiated, or the conditional transition probabilities may depend on temperature.

Let again $\{S_k\}$ be the embedded Markov chain, now with conditional t.p.m.\ $\bm{\Omega}^{(k)}$. 
Additionally, let $d_i^{(t)}$, $t = 1, \dotsc, T$, $i \in \mathcal{S}$, be a family of state dwell-time distributions on the positive integers.
In the inhomogeneous case, we have to take some extra care when considering the time scale of the conditional transition probabilities, as the semi-Markov chain and the embedded chain operate on different time scales. For readability, we will now index both the conditional transition probabilities as well as the dwell-time distributions with $t$, i.e.\ on the time-scale of the semi-Markov chain --- which is the natural choice as real data is observed at this scale.
Then, similar to the homogeneous case, we can construct an approximating Markov chain $\{C^*_t\}$ that operates on the extended state space $\mathcal{S^*}$, with structured block t.p.m.\
\begin{equation}
    \bm{\Gamma}^{(t)} = 
    \begin{pmatrix}
        \bm{\Gamma}_{11}^{(t)} & \cdots & \bm{\Gamma}_{1N}^{(t)} \\
        \vdots & \ddots & \vdots \\
        \bm{\Gamma}_{N1}^{(t)} & \cdots & \bm{\Gamma}_{NN}^{(t)} \\
    \end{pmatrix}
    .
    \label{eq: Gamma^t_hsmm}
\end{equation}
The crucial difference lies in the fact that the diagonal block matrices $\bm{\Gamma}_{ii}^{(t)}$ are time-inhomogeneous in this model formulation and are defined as
\begin{equation}
    \bm{\Gamma}_{ii}^{(t)} = 
    \begin{pmatrix}
        0 & 1-c_i^{(t)}(1) & 0 & \cdots & 0 \\
        0 & 0 & 1-c_i^{(t-1)}(2) & \cdots & 0 \\
        0 & 0 & 0 & \ddots & 0 \\
        \vdots & \vdots & & & \vdots \\
        0 & 0 & \cdots & 0 & 1-c_i^{(t-N_i+2)}(N_i-1) \\
        0 & 0 & \cdots & 0 & 1-c_i^{(t-N_i+1)}(N_i)\\
    \end{pmatrix},
    \label{hsmm_gamma_diag_t}
\end{equation}
for $N_i \geq 2$ and $t=1, \dotsc, T$, with $\bm{\Gamma}_{ii}^{(t)} = 1 - c_i^{(t)}(1)$ for $N_i = 1$. The $N_i \times N_j$ off-diagonal matrices $\bm{\Gamma}_{ij}^{(t)}$ also change slightly in their definition to allow for inhomogeneous state dwell times:
\begin{equation}
    \bm{\Gamma}_{ij}^{(t)} = 
    \begin{pmatrix}
        \omega_{ij}^{(t)} c_i^{(t)}(1) & 0 & \cdots & 0 \\
        \omega_{ij}^{(t)} c_i^{(t-1)}(2) & 0 & \cdots & 0 \\
        \vdots \\
        \omega_{ij}^{(t)} c_i^{(t-N_i+1)}(N_i) & 0 & \cdots & 0 \\
    \end{pmatrix}
    .
\end{equation}
In the case $N_j=1$, the zeros disappear. For $r \in \mathbb{N}$ and $t = 1,\dotsc,T$ we define
\begin{equation}
    c_i^{(t)}(r) = \begin{cases}
        \frac{d_i^{(t)}(r)}{1-F_i^{(t)}(r-1)} \quad &\text{for} \: F_i^{(t)}(r-1) < 1, \\
        1 &\text{for} \: F_i^{(t)}(r-1) = 1.
    \end{cases} 
\end{equation}
We see that the hazard functions $c_i^{(t)}$ are now time-dependent, which is the key to the representation of inhomogeneous state dwell-time distributions. We will now show that this model specification yields an approximation to a semi-Markov process with time-varying dwell-time distributions.
\begin{proposition}
\label{prop:omegat}
    Let $\omega_{ij}^{(t) *}$ denote the probability of a transition from state aggregate $I_i$ to $I_j$, i.e.\
    $$\omega_{ij}^{(t) *} = \Pr (C_{t+1}^* \in I_j \mid C_t^* \in I_i, C_{t+1}^* \notin I_i).$$
    We then have
    $$
        \omega_{ij}^{(t)*} = \omega_{ij}^{(t)},
    $$
    for $i \neq j$, $i,j \in \mathcal{S}$.
\end{proposition}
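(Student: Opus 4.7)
The plan is to compute $\omega_{ij}^{(t)*}$ directly from the block structure of $\bm{\Gamma}^{(t)}$ by conditioning on which specific sub-state of the aggregate $I_i$ the chain occupies at time $t$. Writing $r \in \{1, \dotsc, N_i\}$ for the position within $I_i$ (so $C_t^* = I_i^- + r - 1$), the only relevant entries of $\bm{\Gamma}^{(t)}$ in row $r$ of the block indexed by $(i, \cdot)$ are: the single nonzero diagonal-block entry $1 - c_i^{(t-r+1)}(r)$, and, for each $j \neq i$, the first-column entry of $\bm{\Gamma}_{ij}^{(t)}$ equal to $\omega_{ij}^{(t)} c_i^{(t-r+1)}(r)$.

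First I would verify that these row entries are consistent with a stochastic matrix: since $\sum_{j \neq i} \omega_{ij}^{(t)} = 1$, the total mass on transitions out of $I_i$ from row $r$ is $\sum_{j \neq i} \omega_{ij}^{(t)} c_i^{(t-r+1)}(r) = c_i^{(t-r+1)}(r)$, which combined with the within-aggregate probability $1 - c_i^{(t-r+1)}(r)$ sums to one. Consequently, conditional on $C_t^* = I_i^- + r - 1$ and on a transition out of $I_i$ occurring, the probability of landing in $I_j$ equals
\begin{equation*}
    \frac{\omega_{ij}^{(t)}\, c_i^{(t-r+1)}(r)}{c_i^{(t-r+1)}(r)} = \omega_{ij}^{(t)},
\end{equation*}
where the hazard rate cancels cleanly. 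The decisive feature is that this quantity does not depend on $r$.

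Finally, I would marginalise over the position $r$ within $I_i$: by the law of total probability,
\begin{equation*}
    \omega_{ij}^{(t)*} = \sum_{r=1}^{N_i} \Pr\bigl(C_t^* = I_i^- + r - 1 \,\big|\, C_t^* \in I_i\bigr) \cdot \omega_{ij}^{(t)} = \omega_{ij}^{(t)},
\end{equation*}
completing the argument for $i \neq j$.

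I do not expect a genuine obstacle here; the entire result is driven by the cancellation of $c_i^{(t-r+1)}(r)$ between the off-diagonal block and the row-leaving probability, together with the normalisation $\sum_{j \neq i} \omega_{ij}^{(t)} = 1$. The only thing requiring care is the time-index bookkeeping (that the same shifted hazard $c_i^{(t-r+1)}(r)$ appears both in the diagonal block $\bm{\Gamma}_{ii}^{(t)}$ and in the off-diagonal block $\bm{\Gamma}_{ij}^{(t)}$ in row $r$); once this is matched, the proof is a one-line ratio.
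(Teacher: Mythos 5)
Your argument is correct and is essentially the paper's own proof in row-wise form: both hinge on the fact that $\omega_{ij}^{(t)}$ factors out while the hazard $c_i^{(t-r+1)}(r)$ cancels between the probability of leaving towards $I_j$ and the total probability of leaving $I_i$, whether one writes this per sub-state (as you do) or as a ratio of sums over $r$ (as the paper does). One minor bookkeeping point: in the final law of total probability the weights should be $\Pr\bigl(C_t^* = I_i^- + r - 1 \mid C_t^* \in I_i,\, C_{t+1}^* \notin I_i\bigr)$ rather than conditioning on $C_t^* \in I_i$ alone, but since the per-row landing probability equals $\omega_{ij}^{(t)}$ for every $r$ and either set of weights sums to one, the conclusion is unaffected.
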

The proof can be found in \ref{A1:proofs}.

Now, we consider the approximation quality regarding the dwell-time distributions, in the inhomogeneous case. Let $d_i^{(t)*}$ likewise denote the p.m.f.\ of the dwell time in state aggregate $I_i$ when the state aggregate is entered at timepoint $t$.
\begin{proposition}
\label{prop: d_i^t}
    For $i \in \mathcal{S}$:
    \begin{equation}
        d_i^{(t)*}(r) \begin{cases}
            = d_i^{(t)}(r)  &\text{for} \; r \leq N_i, \\
            \approx d_i^{(t)}(N_i) \prod_{k=1}^{N_i-r} \bigl(1-c_i^{(t+k)}(N_i)\bigr) \quad &\text{for} \; r > N_i.
        \end{cases}
        \label{eq: representation_di_t}
    \end{equation}
\end{proposition}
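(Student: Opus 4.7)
The plan is to trace the sample paths of the approximating chain $\{C_t^*\}$ that start at $I_i^- = \min I_i$ at time $t$ (the event ``aggregate $I_i$ entered at time $t$'') and first exit $I_i$ at time $t+r$. The block structure of $\bm{\Gamma}^{(\cdot)}$ permits only a single deterministic ``staircase'' of within-aggregate transitions: from the $k$th sub-state at time $t+k-1$, the chain either advances to the $(k+1)$th sub-state with probability $1 - c_i^{(t)}(k)$ or leaves $I_i$ through the corresponding row of an off-diagonal block $\bm{\Gamma}_{ij}^{(t+k-1)}$. The crucial observation I would emphasise is that the backward index shift in $c_i^{(s-k+1)}(k)$ appearing in row $k$ of $\bm{\Gamma}_{ii}^{(s)}$ is calibrated precisely so that, at $s = t+k-1$, the hazard is evaluated at the original entry time $t$; equivalently, the row position within the aggregate bookkeeps the time elapsed since entry. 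Iterating, the probability of reaching sub-state $k$ at time $t+k-1$ equals $\prod_{\ell=1}^{k-1}(1 - c_i^{(t)}(\ell))$, which telescopes to $1 - F_i^{(t)}(k-1)$ via the identity $1 - c_i^{(t)}(\ell) = (1 - F_i^{(t)}(\ell))/(1 - F_i^{(t)}(\ell - 1))$.

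For $r \leq N_i$, the path sits at sub-state $r$ at time $t+r-1$ and must leave through row $r$ of $\bm{\Gamma}_{ij}^{(t+r-1)}$; summing $\omega_{ij}^{(t+r-1)} c_i^{(t)}(r)$ over $j \neq i$ gives total exit probability $c_i^{(t)}(r)$ (using $\sum_{j \neq i} \omega_{ij}^{(\cdot)} = 1$, in line with Proposition~\ref{prop:omegat}). Multiplying by the probability of reaching sub-state $r$ yields $(1 - F_i^{(t)}(r-1)) \cdot c_i^{(t)}(r) = d_i^{(t)}(r)$ exactly, which is the first case.

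For $r > N_i$, the chain first reaches the terminal sub-state $I_i^+$ at time $t + N_i - 1$ with probability $1 - F_i^{(t)}(N_i - 1)$, must then remain there across times $t + N_i, \dotsc, t + r - 1$, and finally exits at time $t+r$. While stuck at $I_i^+$, the diagonal-block entry for remaining at time $\tau$ is $1 - c_i^{(\tau - N_i + 1)}(N_i)$, so the effective entry time drifts forward with $\tau$; once the approximating chain is absorbed into the last sub-state, it loses the true entry time and uses ever-more-recent hazards instead. Tracking these shifted hazards and the final exit factor $c_i^{(t + r - N_i)}(N_i)$ produces a closed-form expression for $d_i^{(t)*}(r)$ which agrees with the stated right-hand side $d_i^{(t)}(N_i) \prod_{k=1}^{r - N_i}(1 - c_i^{(t+k)}(N_i))$ up to an index shift inside the product and the substitution $c_i^{(t + r - N_i)}(N_i) \leftrightarrow c_i^{(t)}(N_i)$; this discrepancy, which vanishes when the hazard at level $N_i$ is approximately constant across the tail, is exactly what the ``$\approx$'' is meant to capture (and what recovers the homogeneous result in the time-invariant limit).

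The main obstacle is the careful bookkeeping of time indices through iterated products of block matrices, in particular keeping separate the absolute observation time and the implicit ``time-since-entry'' encoded by the sub-state position. Once the indexing convention in $\bm{\Gamma}_{ii}^{(t)}$ is correctly interpreted, the case $r \leq N_i$ reduces to a clean telescoping argument, while the case $r > N_i$ becomes approximate precisely because an aggregate of only $N_i$ sub-states cannot remember an entry-time history longer than $N_i$ steps.
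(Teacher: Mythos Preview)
Your proposal is correct and follows essentially the same approach as the paper's proof: both trace the deterministic staircase path through the state aggregate, exploit the backward time shift in the rows of $\bm{\Gamma}_{ii}^{(\cdot)}$ so that all hazards in the product are evaluated at the fixed entry time $t$, telescope $\prod_{\ell=1}^{k-1}(1-c_i^{(t)}(\ell)) = 1 - F_i^{(t)}(k-1)$ (the paper states this as a separate lemma, you give the identity directly), and then handle $r>N_i$ by tracking the drifting hazards once the chain is absorbed in $I_i^+$. Your discussion of the $r>N_i$ case is in fact slightly sharper than the paper's: you explicitly separate the two sources of the $\approx$ --- the index shift in the product and the replacement $c_i^{(t+r-N_i)}(N_i)\leftrightarrow c_i^{(t)}(N_i)$ --- whereas the paper simply writes ``for $N_i$ large enough''.
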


While the detailed proof can be found in \ref{A1:proofs}, the core idea remains the same as for homogeneous HSMMs. In this case, however, we want the dwell-time distribution in a state $i$, which is represented by a state aggregate $I_i$ to be the one that is active at the time point the chain transitions into its lowest state $I_i^-$. Due to the time-inhomogeneity, a time shift, as seen above, needs to be implemented such that this is achieved for $r \leq N_i$.

From \eqref{eq: representation_di_t} we see that, in the inhomogeneous case, for $r > N_i$, we cannot expect a geometric tail but only an approximation thereof. Thus, the approximation of the dwell-time distributions for larger dwell times is worse and less controllable than in the homogenous case. Consequently, larger aggregate sizes should be chosen to obtain a close approximation, such that very little mass of the dwell-time distribution is not covered by the aggregate size. This modelling aspect will be investigated more closely in Section \ref{sec:simulation}.

\subsection{Periodically varying dwell-time distributions}
\label{subsec: State dwell times of hidden semi-Markov models}

While we can gain substantially more flexibility by employing HSMMs with inhomogeneous dwell-time distributions that depend on external covariates, such models also lose interpretability. This is because in general, we cannot obtain an overall dwell-time distribution for each state anymore that is unconditional of a specific covariate value. This however would be a desirable summary output, such that the dwell-time distributions can be compared across different states or individuals. 

It is however possible, to deduct such a distribution when restricting to the special case of periodically varying dwell-time distributions,
\begin{equation}
    d_i^{(t)}(r) = d_i^{(t+L)}(r), \quad \text{for all} \; t = 1, \dotsc, T
    \label{eq: periodic dwell-time distributions}
\end{equation}
where $L$ is the integer length of one cycle (for example $L=24$ for hourly sampled data, when the dwell-time distributions vary with the time of day at which a stay is initialised). Furthermore, the conditional transition probabilities are also allowed to vary periodically, leading to
\begin{equation}
    \bm{\Gamma}^{(t)} = \bm{\Gamma}^{(t+L)} \quad \text{for all} \; t = 1, \dotsc, T.
    \label{eq: periodicGamma}
\end{equation}
For such models, certain analytic results can be obtained that allow us to avoid undesired modelling assumptions and arrive at useful summary statistics. First, while for models depending on general covariates, we need to assume a state switch at $t=0$ for feasible estimation of the initial distribution, such a restrictive assumption is not necessary in a periodic setting. By representing the inhomogeneous HSMM by an inhomogeneous HMM with enlarged state space, we can rely on the notion of \textit{periodic stationarity} established by \citet{koslik2023inference}. For Markov chains with periodically varying transition probabilities --- such as $\{C^*_{t}\}$ --- when restricting the dwell times to \eqref{eq: periodic dwell-time distributions}, the unconditional distribution of states can be calculated exactly. For this, consider for a fixed $t$ the thinned Markov chain 
$\{C^*_{t+kL}\}_{k \in \mathbb{N}}$,
which is homogeneous with constant t.p.m.\
$$
\tilde{\boldsymbol{\Gamma}}_t = \boldsymbol{\Gamma}^{(t)} \boldsymbol{\Gamma}^{(t+1)} \ldots \boldsymbol{\Gamma}^{(t+L-1)}.
$$
Provided that this thinned Markov chain is irreducible, it has a unique stationary distribution $\boldsymbol{\delta}^{(t)}$, which is the solution to 
\begin{equation} 
\label{eq: deltat}
\boldsymbol{\delta}^{(t)} = \boldsymbol{\delta}^{(t)} \tilde{\boldsymbol{\Gamma}}_t.
\end{equation}
Therefore we can calculate the stationary distributions on $\mathcal{S}^* = \{1, \dotsc, M\}$ for all time points $t = 1, \dots, L$ within a cycle. We can then easily obtain the periodically stationary state distribution on the original state space $\{1, \dotsc, N\}$ by summing all probabilities within each state aggregate. As the initial distribution, we can then select the one corresponding to the first data point (for details see \cite{koslik2023inference}). 

Moreover, by exploiting \eqref{eq: deltat}, we can proceed to calculate the overall state dwell-time distribution for a given state $i \in \mathcal{S}$ of a periodically inhomogeneous HSMM as a mixture of the time-varying distributions. Due to the approximation, we do in fact calculate the dwell-time distribution for each state aggregate $I_i$ of our approximating Markov chain, however, these two coincide by construction.

\begin{proposition}
    \label{prop: dwell time distr}
    Let $\bm{\Gamma}^{(t)}$ be the 
    t.p.m.\ of an HSMM-approximating HMM on the extended state space $\mathcal{S}^*$, as described in Section \ref{subsec:Inhomogeneous HSMMs}. Also, assume $\bm{\Gamma}^{(t)}$ to vary only periodically due to 
    the state dwell-time distributions $d_i^{(t)}$, and possibly the
    conditional transition probabilities varying periodically, such that \eqref{eq: periodicGamma} is fulfilled. Then, the p.m.f.\ of the overall dwell-time distribution for a stay in state aggregate $I_i$ is
    \begin{equation}
    d_i(r) = \sum_{t=1}^L v_i^{(t)} d_i^{(t)}(r), \quad \text{for} \; r \leq N_i,
    \label{eq: dwell-time distribution phsmm}
    \end{equation}
    with the mixture weights defined as
    $$v_i^{(t)} = \frac{\sum_{l \in I_k: k \neq i} \delta_l^{(t-1)} \gamma_{lI_i^{-}}^{(t-1)}}{\sum_{t=1}^L \sum_{l \in I_k: k \neq i} \delta_l^{(t-1)} \gamma_{lI_i^{-}}^{(t-1)}}, \quad t = 1, \dotsc, L,$$
    and the stationary distribution $\bm{\delta}^{(t)}$ defined as in \eqref{eq: deltat}. Because of their periodic properties, we have that $\bm{\Gamma}^{(0)} = \bm{\Gamma}^{(L)}$ and $\bm{\delta}^{(0)} = \bm{\delta}^{(L)}$.
\end{proposition}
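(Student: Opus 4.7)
The plan is to apply the law of total probability, conditioning on the phase $t \in \{1, \dots, L\}$ at which the approximating Markov chain $\{C_t^*\}$ enters the state aggregate $I_i$. By the block structure of $\bm{\Gamma}^{(t)}$ described in Section \ref{subsec:Inhomogeneous HSMMs}, each off-diagonal block $\bm{\Gamma}_{ki}^{(t)}$ has nonzero entries only in its first column, so every transition into $I_i$ from outside lands on the unique state $I_i^-$. By Proposition \ref{prop: d_i^t}, the conditional dwell-time p.m.f.\ given entry at phase $t$ equals $d_i^{(t)}(r)$ exactly for $r \leq N_i$, which matches the range of validity in the claim. Hence the remaining task is to identify the weights $v_i^{(t)}$ as the probability, under periodic stationarity, that an entry into $I_i$ occurs at phase $t$ conditional on some entry occurring within a cycle.

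First I would invoke the periodic stationarity established via \eqref{eq: periodicGamma} and \eqref{eq: deltat}: the unconditional marginal law of $C_{t-1}^*$ is $\bm{\delta}^{(t-1)}$. Consequently, the joint probability of the event ``$C_{t-1}^* = l$ for some $l \in I_k$ with $k \neq i$, and $C_t^* = I_i^-$'' equals $\delta_l^{(t-1)} \gamma_{l I_i^-}^{(t-1)}$, and summing over all such $l$ gives the marginal probability that a fresh entry into $I_i$ occurs at phase $t$. Restricting $l$ to aggregates $I_k$ with $k \neq i$ is essential here: it excludes the within-aggregate transitions encoded by $\bm{\Gamma}_{ii}^{(t-1)}$, which merely propagate an ongoing stay rather than initiate a new one.

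Next, to convert these phase-indexed entry probabilities into a conditional distribution on $\{1, \dots, L\}$, I would normalize by their sum over one full cycle, which produces exactly the weights $v_i^{(t)}$ stated in the proposition. Combining with the law of total probability yields
$$
d_i(r) = \sum_{t=1}^L v_i^{(t)} \, d_i^{(t)}(r) \quad \text{for } r \leq N_i,
$$
completing the argument.

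The main technical point I expect to be delicate is justifying that $v_i^{(t)}$, built from the periodically stationary marginals $\bm{\delta}^{(t)}$, really coincides with the long-run proportion of entries into $I_i$ occurring at phase $t$. This is a bookkeeping consequence of periodic stationarity, but it requires care because the entry times into $I_i$ form a thinned point process on $\mathbb{Z}$ rather than a renewal process with a single inter-arrival distribution; the per-cycle normalization implicitly averages a Palm-type conditional over phases, and one must check that this averaging is consistent with the unconditional (over the entire chain) interpretation of $d_i(r)$ as the overall dwell-time p.m.f.
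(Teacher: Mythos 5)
Your proposal is correct and follows essentially the same route as the paper's own proof: a law-of-total-probability decomposition over the entry phase $t$, using that entries into $I_i$ can only occur at $I_i^-$ from another aggregate, identifying the conditional dwell-time law given entry at phase $t$ with $d_i^{(t)}(r)$ for $r \leq N_i$, and computing the entry-phase weights from the periodically stationary distribution as $\sum_{l \in I_k,\, k \neq i} \delta_l^{(t-1)} \gamma_{l I_i^-}^{(t-1)}$ normalised over one cycle. The ``Palm-type'' subtlety you flag is handled in the paper at a comparable level of formality, by introducing a random entry phase $\tau$ (implicitly uniform over the cycle, contributing a factor $1/L$ that cancels between numerator and denominator).
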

See \ref{A1:proofs} for the proof.

In conclusion, we have established the overall dwell-time distribution implied by a periodically inhomogeneous HSMM as an equivalent tool to the HMM case. This development allows us to effectively compare model-implied overall dwell-time distributions in periodically inhomogeneous settings between HMMs and HSMMs.

\section{Simulation study}
\label{sec:simulation}

Having established inference procedures and important theoretical properties of inhomogeneous HSMMs, some important questions still suggest experimental verification. Therefore, we conduct a simulation study to investigate i) whether the MLE for inhomogeneous HSMMs is consistent, ii) how the choice of state aggregate sizes influences the approximation quality and iii) the severity of the effects of model misspecification.

For the following simulation experiments, all data will be generated from the same model. This model is chosen to be a 3-state HSMM with
state-specific shifted Poisson dwell-time distributions whose means are modelled as a function of the time of day, where the latter is an integer variable ranging from one to 24. The relationship is parametrised as
\begin{equation}
    \lambda_i^{(t)} = \exp \Bigl( \beta_0^{(i)} + \beta_{1}^{(i)} \sin \bigl( \frac{2 \pi t}{24}\bigr) + \beta_{2}^{(i)} \cos \bigl( \frac{2 \pi t}{24}\bigr)\Bigr), \quad i \in \mathcal{S}, \; t \in \{1, \dotsc, 24\},
    \label{eq:lambdat}
\end{equation}
guaranteeing that \eqref{eq: periodic dwell-time distributions} is fulfilled.
The matrix $\bm{\Omega}$ of conditional transition probabilities is taken to be homogeneous and the state-dependent process is bivariate, assuming contemporaneous conditional independence, with gamma-distributed step lengths and von Mises-distributed turning angles (which is a standard choice made for modelling GPS data and thus closely resembles applied settings). All parameters specifying the model for data generation can be found in \ref{A3:parameters}. For the Monte Carlo simulation, all models were fitted in \texttt{R} \citep{R2023} using the numerical optimisation procedure \texttt{nlm} \citep{schnabel1985modular}. To speed up parameter estimation, the forward algorithm was implemented in C++.

\subsection{Consistency of estimators}

Initially, we want to establish consistency, particularly for the estimators of the coefficients in Equation \eqref{eq:lambdat}. We simulate 500 time series of lengths $T = 1000, 2000, 5000, 10000$ from the above-specified model. For each data set, we fit the appropriate inhomogeneous HSMM via the approximation procedure detailed in Section \ref{subsec:Inhomogeneous HSMMs}. Figure \ref{fig:consistency_state1} shows the empirical distribution of the 500 estimates of the coefficients that determine the mean dwell time in the first state. These simulation experiments indicate consistency of the MLE, i.e.\ that it is unbiased with decreasing variance for increasing sample size. Distributions of the MLEs of the coefficients associated with the dwell-time distributions in the other two states can be found in \ref{A2: additional figures}. However, these experiments also indicate that relatively large data sets are required for reliable estimation

We thus conclude that models of comparable complexity should not be applied to small data sets (though estimation accuracy also depends on the separation of the state-dependent distribution, such that no quantitative recommendations can be given regarding the required sample size).

\begin{figure}
    \centering
    \includegraphics[width=1\textwidth]{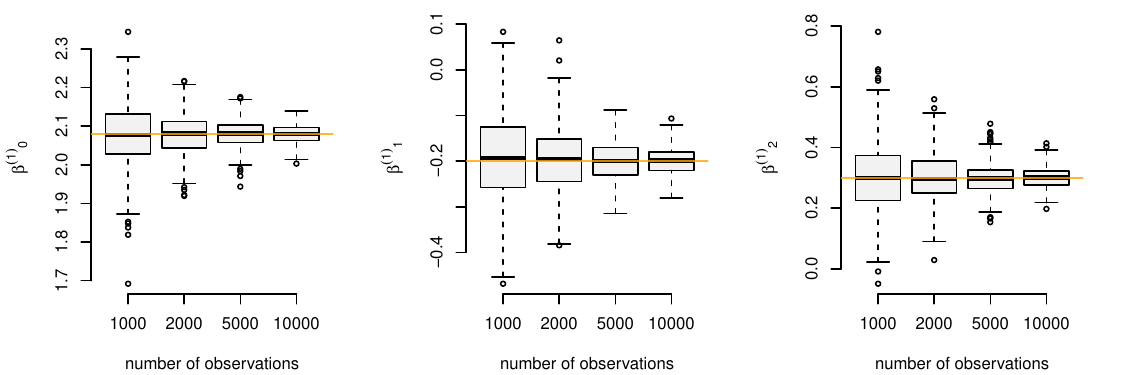}
    \caption{Boxplots of the MLE's coefficients for the mean dwell times of the first state. The true parameters are shown as orange lines.}
    \label{fig:consistency_state1}
\end{figure}

\subsection{Approximation quality in the dwell-time distributions' tails}

From Proposition \ref{prop: d_i^t}, it is evident that, in general, the approximation accuracy regarding the tail of the dwell-time distributions becomes worse and less controllable when incorporating inhomogeneity, compared to the geometric tail exhibited by homogeneous HSMMs. Thus, we want to investigate experimentally which aggregate sizes are necessary to ensure sufficient approximation accuracy. We do this by again simulating 500 data sets, in this case, each of length $T=5000$, and fitting inhomogeneous HSMMs with increasing aggregate sizes. For each state, we consider the 99.5\% quantile of the Poisson dwell-time distribution with its largest mean dwell time and multiply this with a factor ranging from 0.5 to 1.3 by increments of 0.1 (and subsequently round to the nearest larger integer) --- the resulting value is used as the size of the state aggregate. Figure \ref{fig:aggregate_size_state1} shows the distribution of the 500 estimated coefficients that determine the mean dwell time, for factors 0.5, 0.7, 0.9 and 1.3. Factors in the order of 0.5-0.6 --- and hence sizes of the state aggregates that are clearly too small to capture most of the structure of the dwell-time distribution --- result in noticeable biases, especially for the non-intercept coefficients. Increasing the factor to 0.7 already leads to a substantial improvement, and for a factor of 0.9 the approximation becomes virtually exact, as there is no visible difference anymore in the results obtained when using a factor of 0.9 or 1.3. Thus, we can conclude that the not directly controllable tail behaviour of inhomogeneous HSMMs does not constitute a problem in practical applications provided that the aggregate sizes are chosen reasonably large. Guidance for this choice can be obtained by first fitting a homogeneous HMM with geometric dwell times to then calculate e.g.\ the $97.5\%$ quantiles as the size of the associated state aggregate. Figures showing the empirical distributions of the coefficients determining the mean dwell times in the remaining states can be found in Appendix \ref{A2: additional figures}.

\begin{figure}
    \centering
    \includegraphics[width=1\textwidth]{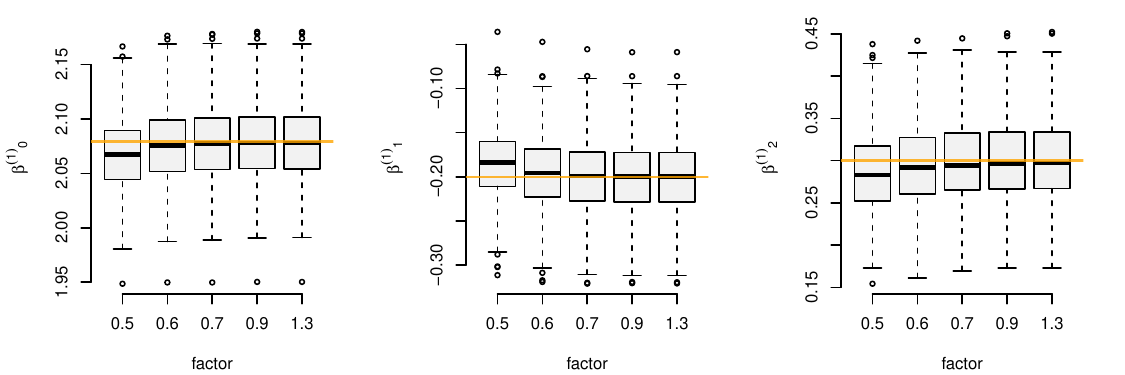}
    \caption{Boxplots of the MLE's coefficients for the mean dwell times of the first state. Each panel shows the distribution of one parameter for increasing aggregate sizes of the approximating HMM. The true parameters are shown as orange lines.}
    \label{fig:aggregate_size_state1}
\end{figure}

\subsection{Model misspecification}

Lastly, we want to assess the severity of model misspecification, specifically when the true data stem from an inhomogeneous HSMM, but an inhomogeneous HMM or a homogeneous HSMM is fitted. To this end, we simulate one time series of length $T = 10^6$ to allow for a simple yet reliable comparison: this way only three distinct fitted models need to be compared, where the large sample size reduces the role of randomness in the comparison (due to the MLE's consistency). Having estimated the three models, we can obtain the most probable, Viterbi-decoded state sequence under each model, to be compared to the true states underlying our simulated data. We obtain decoding accuracies of $96.52\%$, $95.88\%$ and $95.53\%$ for the (correctly specified) periodically inhomogeneous HSMM, the homogeneous HSMM and the periodically inhomogeneous HMM respectively.

\begin{figure}
    \centering
    \includegraphics[width=0.9\textwidth]{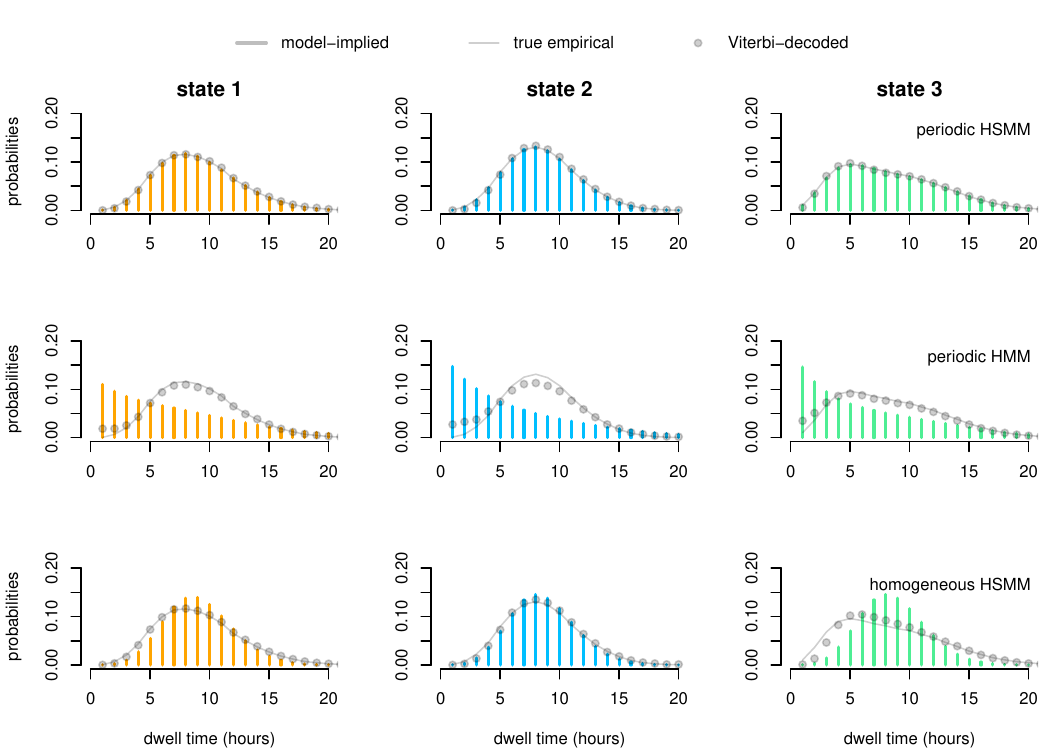}
    \caption{Overall dwell-time distributions in each state for the periodic HSMM, the periodic HMM and the homogeneous HSMM respectively. The model-implied distributions (coloured bars) are complemented by the true empirical distribution (solid grey line) as well as the empirical distribution obtained from a run-length encoding of the Viterbi-decoded states sequences (grey dots).}
    \label{fig:dwell-time_distribution}
\end{figure}

We continue by jointly comparing i) the model-implied overall dwell-time distributions, ii) the true (empirical) dwell-time distributions, and iii) the Viterbi-decoded run lengths.
The latter two can easily be obtained by run-length encoding of the true states and decoded states respectively. The results are shown in Figure \ref{fig:dwell-time_distribution}. For the correct model, all three distributions match very closely (which is to be expected for such a long time series), yet for the two misspecified models, we see a distinct mismatch between the model-implied distribution and the true empirical distribution as well as the decoded empirical distribution. The lack of fit in the state dynamics is strongest for the inhomogeneous HMM. While the homogeneous HSMM can capture most of the structure of the overall dwell-time distribution, it also exhibits a clear lack of fit, mainly by not being able to capture the overdispersion caused by the inhomogeneity in the dwell-time distributions.

Notably, we find that even if a model with misspecified state process dynamics is fitted, 
then the decoded state sequence allows the computation of a close approximation of the true underlying dwell-time distribution. This can be exploited in real analyses as a model-checking tool to uncover misspecification of the state process model.

\section{Application: Muskox movement}
\label{sec:application}

We illustrate the practical potential of fully inhomogeneous HSMMs by applying them to the movement track of an Arctic muskox (\textit{Ovibos moschatus}), collected and analysed by \citet{beumer2020application}. \citet{pohle2022flexible} also used this data set, finding that muskoxen's state-dwell-time distributions were fairly non-geometric. For simplicity, from the 19 movement tracks that were collected, we will only consider the one analysed by \citet{pohle2022flexible} and provided in the \texttt{R} package \texttt{PHSMM}. The data set consists of hourly GPS measurements and contains 6825 observations, including 168 missing values. The GPS measurements were converted into \textit{step lengths} (metres) and \textit{turning angles} (radians) as it is commonly done when analysing horizontal movement data \citep{REFs}. Due to the previous analyses, we have reason to suspect the data to exhibit both temporal variation and non-geometric state dwell times. The diurnal variation in the muskoxen's behaviour is also illustrated in Figure \ref{fig:muskox_boxplot}. 
\begin{figure}
    \centering
    \includegraphics[width=0.7\textwidth]{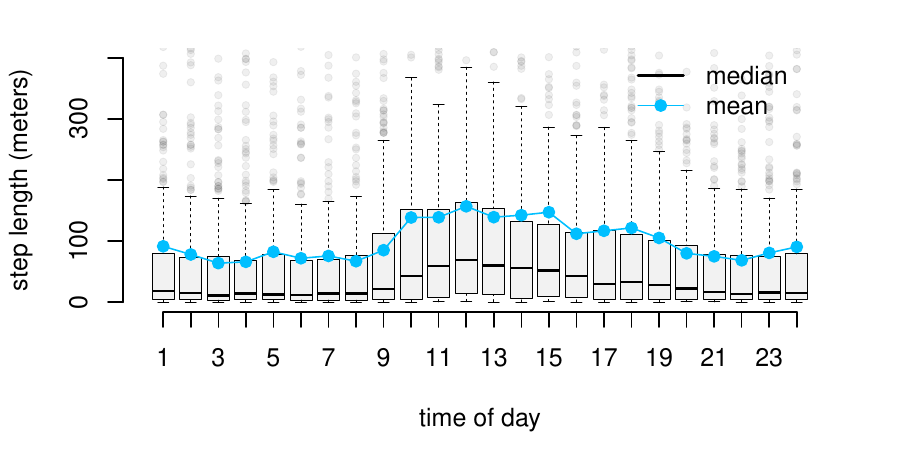}
    \caption{Boxplot of step lengths for each time of the day. The y-axis is limited to 400 meters for visual clarity. Median step length (black) and mean step length (light blue).}
    \label{fig:muskox_boxplot}
\end{figure}
While \citet{beumer2020application} accounted for periodic variation and \citet{pohle2022flexible} flexibly estimated the dwell-time distributions, a combined approach of periodic inhomogeneity in the dwell-time distributions has yet been applied to these data and could reveal novel insights into the animal's behavioural patterns.

As it is not our goal to perform extensive model selection regarding the number of behavioural states, we restrict ourselves to the consideration of models with $N=3$ states. This choice is backed by previous work of \citet{pohle2017selecting}, \citet{beumer2020application} and \citet{pohle2022flexible}, who concluded that 3-state models describe the behaviour of muskoxen adequately. To establish a benchmark, we fit inhomogeneous HMMs with periodically varying transition probabilities $\gamma_{ij}^{(t)}$ based on linear predictors of the form
$$
\eta_{ij}^{(t)} = \beta_0^{(ij)} + \sum_{k=1}^{K_{\bm{\Gamma}}} \beta_{1k}^{(ij)} \sin \Bigl( \frac{2 \pi k t}{24} \Bigr) + \sum_{k=1}^{K_{\bm{\Gamma}}} \beta_{2k}^{(ij)} \cos \Bigl( \frac{2 \pi k t}{24} \Bigr), \quad i \neq j,
$$
where each row of the t.p.m.\ is then computed via the inverse multinomial logistic link. We fit models of degree $K_{\bm{\Gamma}} = 0, 1, 2$, where $0$ refers to homogeneous transition probabilities. For the HSMMs, we use shifted negative binomial dwell-time distributions, to increase flexibility compared to both geometric but also shifted Poisson distributions by estimating an additional dispersion parameter. Inhomogeneity of these state-specific dwell-time distributions can then be realised by modelling the mean $\mu$ and possibly the dispersion parameter $\phi$ (where the variance is $\mu + \mu^2 \phi$) by trigonometric functions of degrees $K_{\mu}$ and $K_{\phi}$, respectively, similar to Equation \eqref{eq:lambdat}. Furthermore, the conditional transition probabilities $\omega_{ij}^{(t)}$ can also be modelled by trigonometric functions with degree $K_{\bm{\Omega}}$. This allows for a wide range of candidate models providing great flexibility. However, to avoid instability, we do not allow degrees higher than one for the dispersion parameters and the conditional transition probabilities. The state-dependent process is modelled assuming contemporaneous conditional independence, with state-dependent gamma and von Mises distributions for step lengths and turning angles, respectively.

A total of nine candidate models were fitted in \texttt{R} \citep{R2023} using the parallelised numerical optimisation procedure \texttt{optimParallel} \citep{gerber2019optimparallel} and an implementation of the forward algorithm in \texttt{C++} to speed up the estimation. The resulting maximum log-likelihoods and information criteria are shown in Table \ref{tab:IC_muskox}.

\vspace{0.5cm}
\begin{table}[H]
\centering
\begin{tabular}{lcccccccc}
\toprule
Model type & No. & $K_{\bm{\Gamma}}$ & $K_{\mu}$ & $K_{\phi}$ & $K_{\bm{\Omega}}$ & $\bm{\ell}$ & $AIC$ & $BIC$ \\
\midrule
\multirow{3}{*}{HMMs} & 1 & 0 & - & - & - & -44790.5 & 89611.0 & 89713.5 \\
{} & 2 & 1 & - & - & - & -44750.0 & 89554.0 & 89738.4 \\ 
{} & 3 & 2 & - & - & - & -44728.1 & 89534.3 & 89800.6 \\  
\midrule
\multirow{6}{*}{HSMMs} & 4 & - & 0 & 0 & 0 & -44726.5 & 89489.1 & 89612.0 \\
{} & 5 & - & 1 & 0 & 0 & -44686.1 & 89420.2 & \textbf{89584.1} \\ 
{} & 6 & - & 2 & 0 & 0 & -44679.9 & 89419.7 & 89624.6 \\ 
{} & 7 & - & 1 & 1 & 0 & -44666.4 & 89392.7 & 89597.6 \\
{} & 8 & - & 1 & 0 & 1 & -44679.9 & 89419.8 & 89624.6 \\
{} & 9 & - & 1 & 1 & 1 & -44652.4 & \textbf{89376.8} & 89622.7 \\
\bottomrule
\end{tabular}
\caption{Comparison of model performances for the muskox data according to AIC and BIC. The lowest values are highlighted in boldface.}
\label{tab:IC_muskox}
\end{table}

We find that HSMMs improve the model fit drastically compared to HMMs and that within the class of HSMMs, models that account for periodic variation clearly outperform homogeneous models. While the substantial flexibility provided by model 9 results in the best AIC value, caution has to be taken when performing model selection regarding the dwell-time distributions' parameters: Such a model formulation results in a state-specific GLMSS \citep{gamlss2005} performed on \textit{latent structures}, in addition to the inhomogeneity in the conditional transition probabilities. Moreover, it should be stated that estimation of the time-varying dwell-time distributions exhibits varying precision at different times of the day, as some of the time points provide little information if the animal does not transition into a certain state at one of these time points. From the simulation experiments, it also became evident that the coefficients determining the mean dwell time require a considerable amount of data to be estimated precisely (see Figure \ref{fig:consistency_state1}). Thus, being conservative here, we examine the results of model 5 more thoroughly while comparing it to its baseline competitors, i.e.\ the inhomogeneous HMM with $K_{\bm{\Gamma}}=1$ (model 2) and the homogeneous HSMM (model 4). 

The marginal distributions can be obtained by averaging the periodically stationary state probabilities, and we observe a satisfactory fit (see Figure \ref{fig:marginal} in \ref{A2: additional figures}). Inspecting the estimated component distributions, we can interpret the three states as proxies for \textit{resting}, \textit{foraging} and \textit{travelling} behaviour.

Focussing on the state process dynamics, in such a 3-state scenario the interpretation of nine time-varying transition probabilities as functions of the time of day can be tedious; thus, we can consider the periodically stationary distribution (see Equation \eqref{eq: deltat}) as a more interpretable summary statistic. For HSMMs, we easily obtain it from the approximating HMM by summing the periodically stationary probabilities within each state aggregate at each time point of the day. These periodically stationary state probabilities of the inhomogeneous HSMM (model 5) and inhomogeneous HMM (model 2) are shown in Figure \ref{fig:stationary_p_muskox}. For both models, we can observe that resting becomes more likely at nighttime and that the probability of occupying either of the two active states increases during the day.
\begin{figure}[H]
    \centering
    \includegraphics[width=0.9\textwidth]{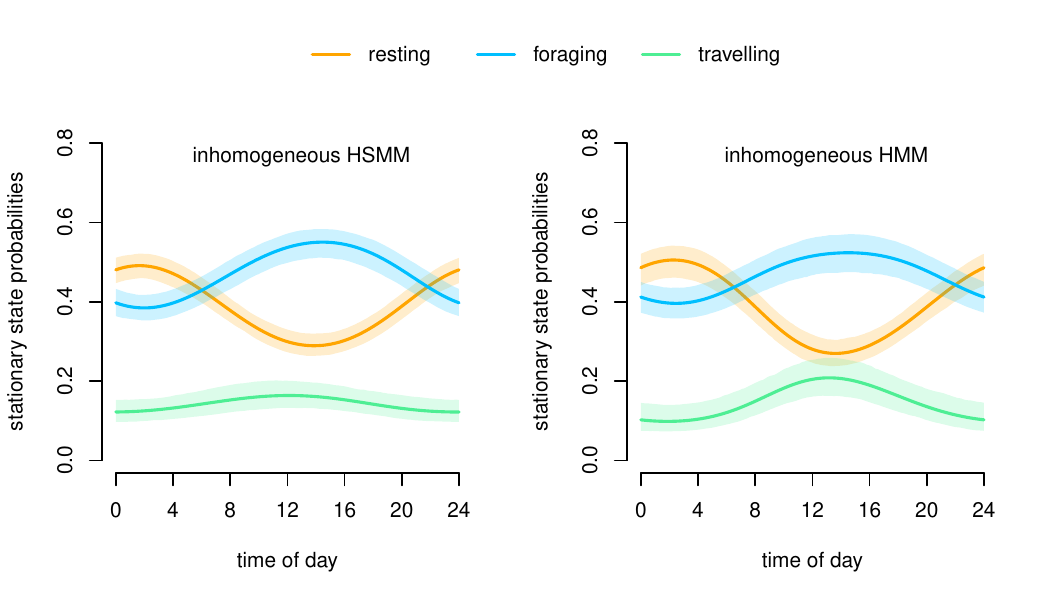}
    \caption{Periodically stationary distribution obtained from the inhomogeneous HSMM (model 2) and inhomogeneous HMM (model 2). Pointwise 95\% confidence intervals were obtained via Monte Carlo simulation from the approximate normal distribution of the maximum likelihood estimator.}
    \label{fig:stationary_p_muskox}
\end{figure}
 While the inhomogeneous HSMM shows less periodic variability in the travelling probability compared to the inhomogeneous HMM, it nonetheless exhibits a considerable amount of for the resting and foraging states. As such, it allows for detailed behavioural inference regarding the effects of diurnal variation, where the homogeneous HSMM fails to capture any periodic effects.
 


We proceed to investigate the time-varying dwell-time distributions obtained from the inhomogeneous HSMM. Figure \ref{fig:time_varying_distr_heat} shows the mean dwell time (which is explicitly modelled), and in addition, the implied shifted negative binomial distributions as a function of the time of day, for each state. We find a substantial variation in the mean dwell time over a day, where the resting state has particularly long expected dwell times in the late evening hours, dwell times in the foraging state are longest when beginning a stay at noon and the travelling state shows its longest dwell times in the morning. The isolated interpretation of these distributions is however limited, as they only indicate the conditional probability of a certain stay length, while not accounting for the probability that the animal actually begins a stay at the time point of interest. Nevertheless, the substantial variation in mean dwell times underlines the superiority of the inhomogeneous HSMM over its homogeneous counterpart, as the latter cannot account for such variation at all.

\begin{figure}[H]
    \centering
    \includegraphics[width=1\textwidth]{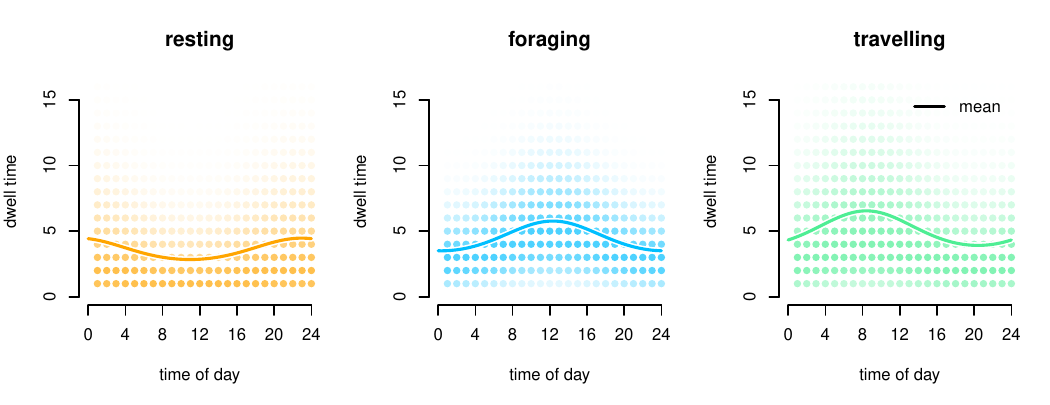}
    \caption{Time-varying dwell-time distributions as a function of the time-of-day. The distribution is visualised by colour-coding each dwell time by its associated probability. For better interpretability, the estimated time-varying mean is also shown.}
    \label{fig:time_varying_distr_heat}
\end{figure}

To obtain a simpler, more interpretable picture that incorporates the probability that a stay is actually started, we use the \textit{overall dwell-time distribution} implied by the HSMM as developed in Section \ref{sec:Methodology} as a complementary inference tool. Figure \ref{fig:overall_ddwell_muskox} shows this overall distribution, not only for the inhomogeneous HSMM but also for its homogeneous counterpart and the inhomogeneous HMM, as obtained by its respective overall dwell-time distribution \citep{koslik2023inference}. We complement these distributions with the empirical distributions obtained from a run-length encoding of each model's Viterbi-decoded state sequence, as explained in Section \ref{sec:simulation}. We find a stark difference when comparing the three models. In particular, it becomes evident why the HSMMs are superior to the HMM, as both models capture the overall shape of the dwell-time distributions notably better compared to the HMM. While, in general, the dwell-time distributions implied by periodically inhomogeneous HMMs can also deviate substantially from a geometric distribution as shown by \citet{koslik2023inference}, and Figure \ref{fig:stationary_p_muskox} indicates a considerable amount of periodic variation, in this case, it does not seem sufficiently strong to allow the overall dwell-time distribution to differ much from a geometric shape. Particularly, the muskoxen appears to have additional, unobserved reasons to exhibit dwell times that can neither be captured well by a geometric distribution nor by periodic variation alone. Overall the two HSMMs dwell-time distributions do not differ substantially and both show a minor lack of fit with respect to the travelling mode. It seems as if the model-implied mixture of negative binomial distributions is not flexible enough to capture the slightly bimodal shape present in the empirical distribution and in the flexible version estimated by \citet{pohle2022flexible}.
\begin{figure}
    \centering
    \includegraphics[width=1\textwidth]{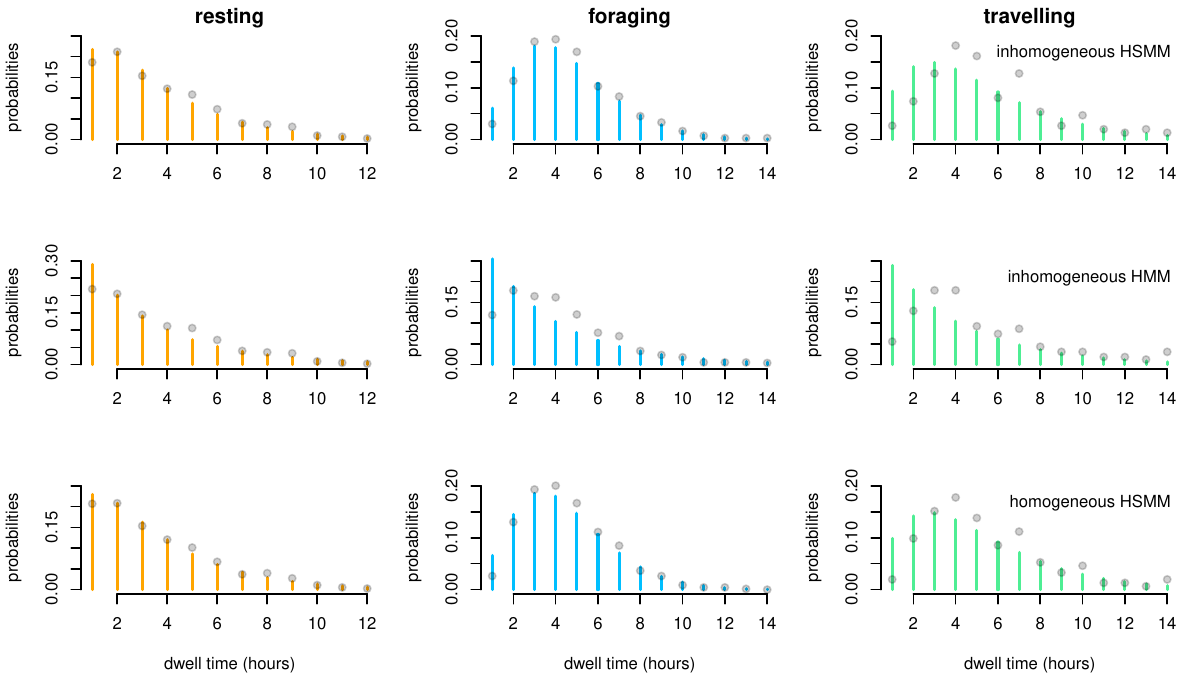}
    \caption{Overall dwell-time distributions of the inhomogeneous HSMM and its two baseline competitors. The coloured bars show the model-implied overall dwell-time distribution, while the dots are the empirical dwell-time distributions obtained from the respective models' Viterbi-decoded state sequence.}
    \label{fig:overall_ddwell_muskox}
\end{figure}
Nonetheless, from an inference perspective, it is highly beneficial to allow for periodic variation in the dwell-time distributions in this application, as we can still learn about the periodic variation in behavioural modes while using a parsimonious model that incorporates the dwell time structure very well. Thus, the novel model class of fully inhomogeneous HSMMs does offer additional flexibility in modelling real data.

\section{Discussion}
\label{sec:discussion}

In several areas of empirical research, HSMMs are already well-established tools for modelling processes with latent components. They extend regular HMMs and are suited in particular in scenarios when the latter do not provide sufficient flexibility regarding the modelling of the state process dynamics. However, while inhomogeneity can nowadays easily be incorporated into the state process of HMMs, a comparable framework had been missing for HSMMs. In particular, the inference procedure established by \citet{langrock2011hidden} only allowed for covariate effects on the conditional transition probabilities, given that the state is left, while leaving the state dwell-time distributions homogeneous.

We successfully addressed this limitation by expanding the HSMM formulation to accommodate inhomogeneity in the state dwell times. As a result, we achieved the estimation of \textit{fully inhomogeneous HSMMs} through direct numerical approximate maximum likelihood estimation. Thereby, we overcame previous hurdles and broadened the scope of practical applications for these models while providing a very general estimation approach that allows for wide flexibility regarding the inclusion of inhomogeneity in the dwell-time distributions, as well as in the conditional transition probabilities. In addition to establishing the technical details necessary for inference in such models, we derived important properties for the special case of periodically varying dwell-time distributions, extending the results of \citet{koslik2023inference} and thereby allowing for better model interpretability. Complementary to these 
theoretical properties
, we conducted a simulation study to investigate the consistency of the MLE, to shed some light on how to choose suitable sizes of the state aggregates for the approximate inference procedure and to explore the effects of model misspecification.

To showcase the practical relevance of the novel model class, we applied it to a case study of muskox movement. The results suggest that inhomogeneous HSMMs can be a considerably better choice when the data is characterised by both non-monotonous state dwell times and considerable periodic variation.
In such scenarios, we argue that the avoidance of unrealistic assumptions and the resulting improvement in the model fit can justify the additional computational overhead of HSMMs. 

\section*{Acknowledgements}
The author gratefully acknowledges Roland Langrock for helpful comments on an earlier version of this manuscript.

\bibliographystyle{elsarticle-harv} 
\bibliography{cas-refs}

\newpage
\appendix

\section{Proofs}
\label{A1:proofs}

\begin{proof}[Proof of Proposition \ref{prop:omegat}]
\begin{align*}
    \omega_{ij}^{(t)*} &= \frac{\Pr (C_{t+1}^* \in I_j , C_t^* \in I_i)}{\Pr (C_t^* \in I_i, C_{t+1}^* \notin I_i)} \\
    &= \frac{\sum_{r=1}^{N_i} \omega_{ij}^{(t)} c_i^{(t-r+1)}(r)}{\sum_{j=1}^N \sum_{r=1}^{N_i} \omega_{ij}^{(t)} c_i^{(t-r+1)}(r)} \\
    &= \frac{\omega_{ij}^{(t)} \sum_{r=1}^{N_i} c_i^{(t-r+1)}(r)}{\sum_{r=1}^{N_i} c_i^{(t-r+1)}(r)} = \omega_{ij}^{(t)}
\end{align*}
\end{proof}

\begin{lemma}
    Let $i \in \{1, \dotsc, N\}$ and $n \in \mathbb{N}$. Then
    \begin{equation}
        \prod_{k=1}^n \left(1-c_i(k)\right) = 1 - F_i(n).
        \label{eq: cum_distr_HSMM}
    \end{equation}

    \begin{proof}
    By induction. For the base case it holds $1-c_i(1) = 1-d_i(1) = 1-F_i(1)$. Now assume that \eqref{eq: cum_distr_HSMM} holds for some $n \in \mathbb{N}$. If $F_i(n) < 1$ then
    \begin{align*}
        \prod_{k=1}^{n+1} \left(1-c_i(k)\right) &= \left( \prod_{k=1}^{n} \left(1-c_i(k)\right)\right) \left(1 - c_i(n+1)\right)\\
        &= \left( 1 - F_i(n)\right) \left( 1 - \frac{d_i(n+1)}{1 - F_i(n)} \right)\\
        &= 1 - \left( F_i(n) + d_i(n+1)\right) = 1 - F_i(n+1).
    \end{align*}
    If $F_i(n)=1$ then $F_i(n+1)=1$ and
    $$\prod_{k=1}^{n+1} \left( 1 - c_i(k) \right) = \left( 1 - F_i(n)\right) \left( 1 - c_i(n+1) \right) = 0 = 1 - F_i(n+1).$$
\end{proof}
\end{lemma}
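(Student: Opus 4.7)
The plan is to prove the identity by induction on $n$, directly using the piecewise definition of the hazard rates $c_i$ as stated just before the lemma. The statement is a discrete-time analogue of the familiar relation between a hazard rate and a survival function, so the calculation should be short; the only subtlety is that $c_i$ is defined by cases, so we need to handle the two regimes $F_i(n)<1$ and $F_i(n)=1$ separately in the inductive step.

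For the base case $n=1$, I would observe that $F_i(0)=0<1$, so the first branch applies and $c_i(1)=d_i(1)/(1-F_i(0))=d_i(1)$. Hence $1-c_i(1)=1-d_i(1)=1-F_i(1)$, as desired. For the inductive step, suppose $\prod_{k=1}^n(1-c_i(k))=1-F_i(n)$.

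In the generic subcase $F_i(n)<1$, the definition gives $c_i(n+1)=d_i(n+1)/(1-F_i(n))$, so
\begin{align*}
\prod_{k=1}^{n+1}(1-c_i(k))
&= (1-F_i(n))\Bigl(1-\tfrac{d_i(n+1)}{1-F_i(n)}\Bigr) \\
&= (1-F_i(n)) - d_i(n+1) \\
&= 1-F_i(n+1),
\end{align*}
where the last step uses $F_i(n+1)=F_i(n)+d_i(n+1)$. In the degenerate subcase $F_i(n)=1$, the inductive hypothesis gives $\prod_{k=1}^n(1-c_i(k))=0$, and the whole product at $n+1$ is therefore $0$; since $F_i$ is a c.d.f.\ on $\mathbb{N}$, $F_i(n)=1$ forces $F_i(n+1)=1$, so the right-hand side also equals $0$ and the equality holds trivially.

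I do not expect any real obstacle here; the only place to be careful is that when $F_i(n)=1$ we must not multiply and divide by $1-F_i(n)=0$, which is precisely why $c_i$ is defined in two pieces. Handling that case by falling back on the inductive hypothesis (rather than on the formula for $c_i(n+1)$) avoids any division-by-zero issue and closes the argument.
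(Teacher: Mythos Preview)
Your proof is correct and follows essentially the same approach as the paper: induction on $n$, with the base case $1-c_i(1)=1-d_i(1)=1-F_i(1)$ and the inductive step split into the two subcases $F_i(n)<1$ and $F_i(n)=1$, handled exactly as you describe. The only (minor) addition on your side is the explicit remark that $F_i(0)=0$ justifies using the first branch for $c_i(1)$, which the paper leaves implicit.
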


\begin{proof}[Proof of Proposition \ref{prop: d_i^t}]
    The case $N_i = 1$ is trivial, so we consider the case $N_i \geq 2$. Since every sojourn in the state aggregate $I_i$ starts in state $I_i^-$, and taking into account the special structure of $\bm{\Gamma}^{(t)}$, it follows that
    \begin{align*}
        d_i^{(t)*}(1) &= Pr\left(C_{t+1}^* \notin I_i \mid C_t^* \in I_i, C^*_{t-1} \notin I_i \right)\\
        &= \sum_{l \in \mathcal{S} : l \neq i} \omega_{il}^{(t)} c_i(1)^{(t)} = c_i(1)^{(t)} = d_i^{(t)}(1).
    \end{align*}
    
    We now consider $d_i^{(t)*}(r)$ for $2 \leq r \leq N_i$. The structure of the matrices $\bm{\Gamma}^{(t)}, \bm{\Gamma}^{(t+1)}, \dotsc, \bm{\Gamma}^{(t+r)}$ is such that the dwell time in state aggregate $I_i$, starting in time point $t$, is of length $r$ if and only if the state sequence successively runs through the states $I_i^-$ in $t$, $I_i^-+1$ in $t+1$ up to $I_i^-+r-1$ in $t+r-1$ and the switches from state $I_i^- +r-1$ to a different state aggregate in $t+r$.
    If $F_i^{(t)}(r-1) < 1$ then from the special structure of $\bm{\Gamma}^{(t)}$ we see that \eqref{eq: cum_distr_HSMM} again applies
    \begin{align*}
        d_i^*(r) &= \prod_{k=0}^{r-2} \left( 1-c_i^{((t+k)-k}(k+1)) \right) \sum_{l \in \mathcal{S} \setminus i} \omega^{(t+r-1)}_{il} c_i^{((t+r-1)-(t+r-1))}(r)\\
        &= \prod_{k=1}^{r-1} \left( 1-c_i^{(t)}(k)) \right) \sum_{l \in \mathcal{S} \setminus i} \omega^{(t+r-1)}_{il} c_i^{(t)}(r)\\
        &= \left( 1 - F_i^{(t)}(r-1)\right) \frac{d_i^{(t)}(r)}{1 - F_i^{(t)}(r-1)} = d_i^{(t)}(r),
    \end{align*}
    as $t$ is fixed. If $F_i^{(t)}(r-1) = 1$ then $d_i^{(t)*}(r) = 0 = d_i^{(t)}(r)$.
    
    Again, we consider the case $r > N_i$. The dwell time in state aggregate $I_i$, starting in time point $t$, is of length $r > N_i$ if and only if the state sequence successively runs through the states $I_i^-$ in $t$, $I_i^-+1$ in $t+1$ up to $I_i^-+N_i-1 = I_i^+$ in $t+N_i-1$ and then remains in state $I_i^+$ from $t+N_i-1$ up to $t+r-1$, and finally switches to a different state aggregate. If $F_i^{(t)}(N_i-1)<1$, then
    \begin{align*}
        d_i^{(t)*}(r) &= \prod_{k=1}^{N_i-1} \left( 1 - c_i^{(t)}(k)\right) \prod_{k=0}^{r-N_i-1} \left(1-c_i^{(t+k)}(N_i)\right) \sum_{l \in \mathcal{S} \setminus i} \omega_{li}^{(t+r-1)} c_i^{(t+r-N_i+1)}(N_i)\\
        &= \left(1-F_i^{(t)}(N_i-1)\right) c_i^{(t+r-N_i+1)}(N_i) \prod_{k=0}^{r-N_i-1} \left(1-c_i^{(t+k)}(N_i)\right)\\
        &\approx d_i^{(t)}(N_i) \prod_{k=0}^{r-N_i-1} \left(1-c_i^{(t+k)}(N_i)\right)
    \end{align*}
    for $N_i$ large enough.
\end{proof}

\begin{proof}[Proof of Proposition \ref{prop: dwell time distr}]
    We consider a certain time point $t \in \{1, \dotsc, L\}$ to be a realisation of the random variable $\tau$. Then we start by noting that in the HSMM setting, for $r \leq N_i$
    \begin{align*}
        d_i^{(t)}(r) = \Pr(C_{\tau+r}^* \notin I_i, C_{\tau+r-1}^* \in I_i, \dotsc, C_{\tau+1}^* \in I_i \mid C_{\tau}^* \in I_i, C_{\tau-1}^* \notin I_i, \tau = t).
    \end{align*}
    The probability we are interested in as the overall dwell-time distribution is 
    \begin{align*}
        \Pr(C_{\bigcdot+r}^* \notin I_i, C_{\bigcdot+r-1}^* \in I_i, \dotsc, C_{\bigcdot+1}^* \in I_i \mid C_{\bigcdot}^* \in I_i, C_{\bigcdot-1}^* \notin I_i).
    \end{align*}
    Again, the dot notation is used to emphasise that we condition on the event that the transition from a different state aggregate to the aggregate $I_i$ has happened at some time point in the cycle. We can obtain the expression from above as a mixture of the time-varying dwell-time distributions. Choose the mixture weights as
    \begin{align*}
        v_i^{(t)} &= \Pr(C^*_{\tau} \in I_i, C^*_{\tau-1} \notin I_i, \tau = t \mid C^*_{\bigcdot} \in I_i, C^*_{\bigcdot-1} \notin I_i) \\
        &= \frac{\Pr(C^*_{\tau} \in I_i, C^*_{\tau-1} \notin I_i, \tau = t)}{\Pr(C^*_{\bigcdot} \in I_i, C^*_{\bigcdot-1} \notin I_i)},
    \end{align*}
    where the equality is justified by the fact that the former event is contained in the latter. Then, we can obtain the overall dwell-time distribution as
    \begin{align*}
    d_i(r) &= \sum_{t=1}^L d_i^{(t)}(r) \cdot v_i^{(t)}\\
    &= \sum_{t=1}^L \Pr(C_{\tau+r}^* \notin I_i, C_{\tau+r-1}^* \in I_i, \dotsc, C_{\tau+1}^* \in I_i \mid C_{\tau}^* \in I_i, C_{\tau-1}^* \notin I_i, \tau = t)\\
    & \qquad \cdot \frac{\Pr(C^*_{\tau} \in I_i, C^*_{\tau-1} \notin I_i, \tau = t)}{\Pr(C^*_{\bigcdot} \in I_i, C^*_{\bigcdot-1} \notin I_i)}\\
    &= \frac{\sum_{t=1}^L \Pr(C_{\tau+r}^* \notin I_i, C_{\tau+r-1}^* \in I_i, \dotsc, C_{\tau+1}^* \in I_i, C_{\tau}^* \in I_i, C_{\tau-1}^* \notin I_i, \tau = t)}{\Pr(C^*_{\bigcdot} \in I_i, C^*_{\bigcdot-1} \notin I_i)} \\
    &= \frac{ \Pr(C_{\bigcdot+r}^* \notin I_i, C_{\bigcdot+r-1}^* \in I_i, \dotsc, C_{\bigcdot+1}^* \in I_i, C_{\bigcdot}^* \in I_i, C_{\bigcdot-1}^* \notin I_i)}{\Pr(C^*_{\bigcdot} \in I_i, C^*_{\bigcdot-1} \notin I_i)}\\
    &= \Pr(C_{\bigcdot+r}^* \notin I_i, C_{\bigcdot+r-1}^* \in I_i, \dotsc, C_{\bigcdot+1}^* \in I_i \mid C_{\bigcdot}^* \in I_i, C_{\bigcdot-1}^* \notin I_i).
    \end{align*}
    We calculate the mixture weights explicitly by splitting the numerator and denominator. For the numerator, we need to consider all possible paths from all states in different state aggregates $I_k$, $k \neq i$ to state aggregate $I_i$, as a stay in state aggregate $I_i$ only begins when the chain has previously been in a different state aggregate. As $I_i$ can only be entered in its lowest state $I_i^-$ we get
    \begin{align*}
        &\Pr(C^*_{\tau} \in I_i, C^*_{\tau-1} \notin I_i, \tau = t) = \sum_{l \in I_k, k \neq i} \delta_{l}^{(t-1)} \gamma_{l I_i^-}^{(t-1)} \frac{1}{L}.
    \end{align*}
    For the denominator, we obtain
    \begin{align*}
        \Pr(C^*_{\bigcdot} \in I_i, C^*_{\bigcdot-1} \notin I_i) &= \sum_{t=1}^L \Pr(C^*_{\tau} \in I_i, C^*_{\tau-1} \notin I_i, \tau = t) \\
        &= \frac{1}{L} \sum_{t=1}^L \sum_{l \in I_k, k \neq i} \delta_{l}^{(t-1)} \gamma_{l I_i^-}^{(t-1)}.
    \end{align*}
    And therefore, by combining the numerator and denominator, we get:
    \begin{align*}
        v_i^{(t)} = \frac{\sum_{l \in I_k, k \neq i} \delta_l^{(t-1)} \gamma_{lI_i^{-}}^{(t-1)}}{\sum_{t=1}^L \sum_{l \in I_k, k \neq i} \delta_l^{(t-1)} \gamma_{l I_i^{-}}^{(t-1)}}.
    \end{align*}
\end{proof}

\newpage

\section{Additional figures}
\label{A2: additional figures}

\begin{figure}[H]
    \centering
    \includegraphics[width=1\textwidth]{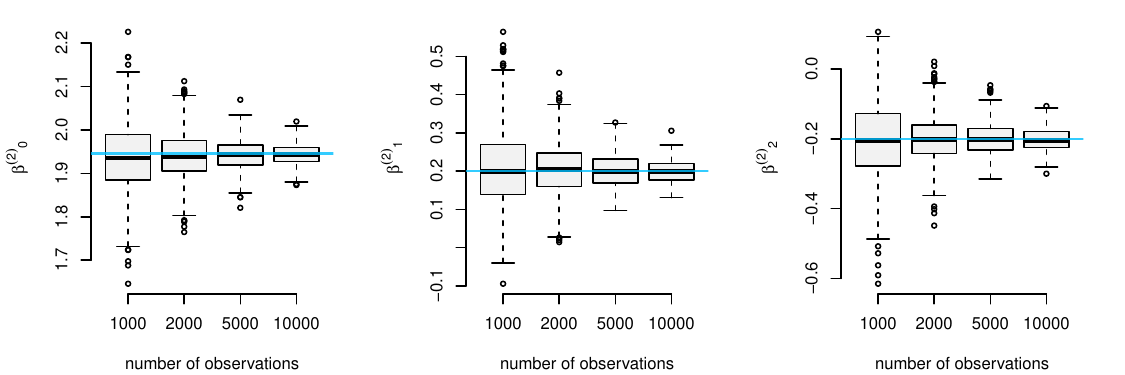}
    \caption{Boxplots of the MLE's coefficients for the mean dwell times of the second state. Each panel shows the distribution of one parameter for increasing sample size. The true parameters are shown as blue lines.}
    \label{fig:consistency_state2}
\end{figure}

\begin{figure}[H]
    \centering
    \includegraphics[width=1\textwidth]{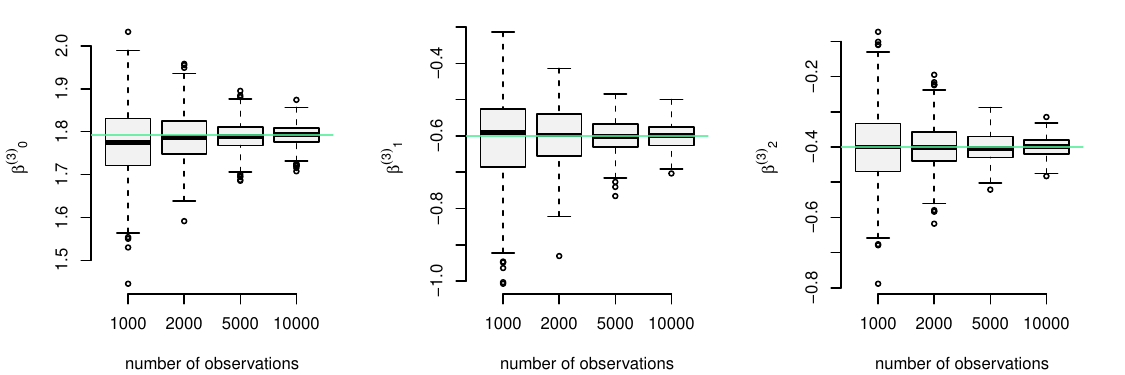}
    \caption{Boxplots of the MLE's coefficients for the mean dwell times of the third state. Each panel shows the distribution of one parameter for increasing sample size. The true parameters are shown as green lines.}
    \label{fig:consistency_state3}
\end{figure}

\begin{figure}[H]
    \centering
    \includegraphics[width=1\textwidth]{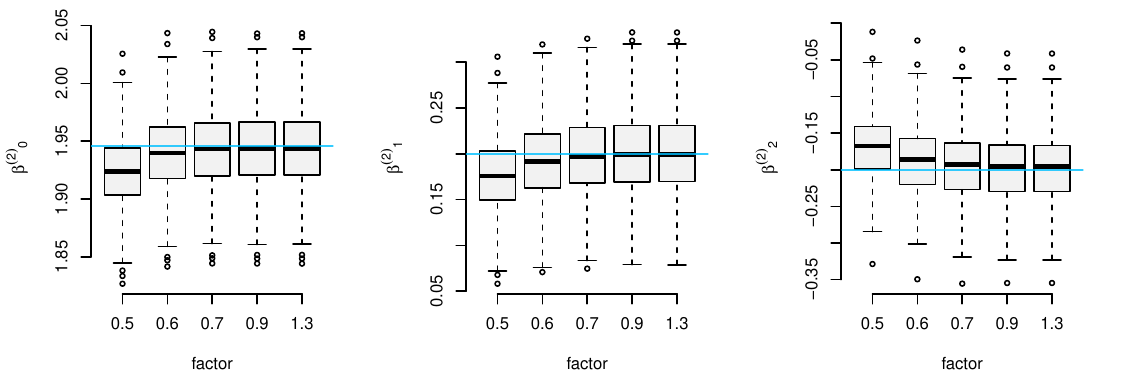}
    \caption{Boxplots of the MLE's coefficients for the mean dwell times of the third state. Each panel shows the distribution of one parameter for increasing aggregate sizes of the approximating HMM. The true parameters are shown as blue lines.}
    \label{fig:aggregate_size_state2}
\end{figure}

\begin{figure}[H]
    \centering
    \includegraphics[width=1\textwidth]{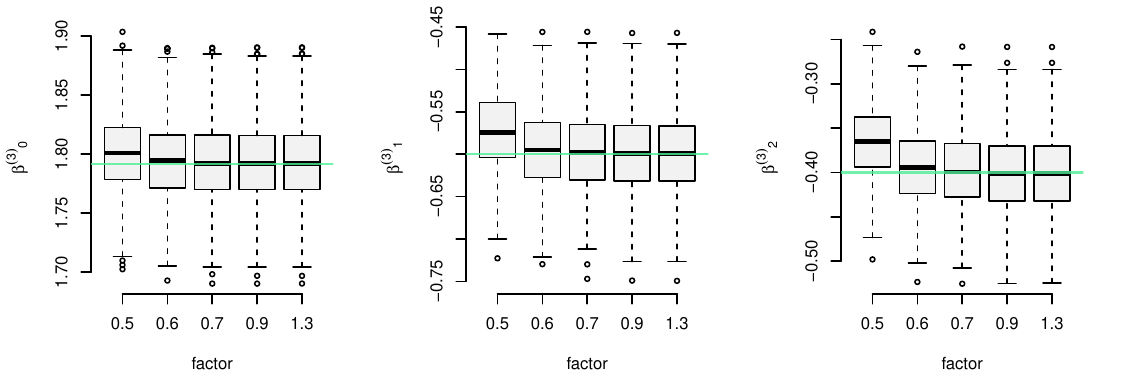}
    \caption{Boxplots of the MLE's coefficients for the mean dwell times of the third state. Each panel shows the distribution of one parameter for increasing aggregate sizes of the approximating HMM. The true parameters are shown as green lines.}
    \label{fig:aggregate_size_state3}
\end{figure}

\begin{figure}[H]
    \centering
    \includegraphics[width=1\textwidth]{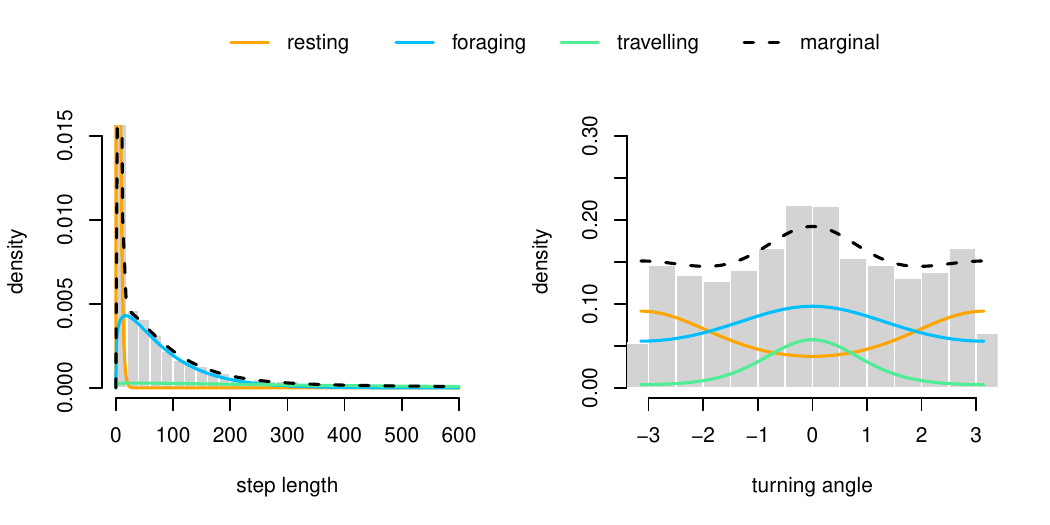}
    \caption{Weighted state-dependent and marginal distributions of the muskox's step lengths and turning angles obtained from the estimated inhomogeneous HSMM}
    \label{fig:marginal}
\end{figure}

\newpage

\subsection{Parameters used for simulation}
\label{A3:parameters}

The parameters used to generate data from an inhomogeneous HSMM in Section \ref{sec:simulation} are:

$$\bm{\beta}^{(1)} = (\log(8), -0.2, 0.3)$$
$$\bm{\beta}^{(2)} = (\log(7), 0.2, -0.2)$$
$$\bm{\beta}^{(2)} = (\log(6), -0.6, 0.4)$$

$$\bm{\Omega} = \begin{pmatrix}
    0 & 0.7 & 0.3 \\
    0.2 & 0 & 0.8 \\
    0.5 & 0.5 & 0
\end{pmatrix}$$

$$\bm{\mu}_{step} = (20, 200, 800)$$
$$\bm{\sigma}_{step} = (20, 150, 500)$$
$$\bm{\kappa}_{angle} = (0.2, 1, 2.5)$$

\end{document}